\newcommand*\dashline{\rotatebox[origin=c]{90}{$\dabar@\dabar@\dabar@\dabar@\dabar@\dabar@\dabar@$}}
\newtheorem{thm}{Theorem}[]
\newtheorem{cor}{Corollary}
\newtheorem{lem}{Lemma}
\theoremstyle{remark}
\theoremstyle{definition}
\newtheorem{defn}{Definition}
\begin{document}

% paper title
% can use linebreaks \\ within to get better formatting as desired
%%%%%%%\title{On the Capacity of the Gaussian Cognitive Z-Interference Channel:\\ Less Noisy and More Capable CGZIC}
\title{\huge  The Capacity of Less Noisy Cognitive Interference Channels}

\author{\normalsize Mojtaba Vaezi% <-this % stops a space
%\thanks{*This work was not supported by any organization}% <-this % stops a space
%\thanks{$^{1}$H. Kwakernaak is with Faculty of Electrical Engineering, Mathematics and Computer Science,
%        University of Twente, 7500 AE Enschede, The Netherlands
%        {\tt\small h.kwakernaak at papercept.net}}%
%\thanks{$^{2}$P. Misra is with the Department of Electrical Engineering, Wright State University,
%        Dayton, OH 45435, USA
%        {\tt\small p.misra at ieee.org}}%
\\Department of Electrical and Computer Engineering\\
McGill University\\
Montreal, Quebec H3A 0C6, Canada\\
Email: mojtaba.vaezi@mail.mcgill.ca
}

%
%\author{Mojtaba Vaezi\\
%Department of Electrical and Computer Engineering\\
%McGill University\\
%Montreal, Quebec H3A 2A7, Canada\\
%Email: mojtaba.vaezi@mail.mcgill.ca
%}

% use for special paper notices
%\IEEEspecialpapernotice{(Invited Paper)}

% make the title area
\maketitle

%----------------------------------------------------------------------

\begin{abstract}

Fundamental limits of the {\it cognitive interference channel} (CIC) with
two pairs of transmitter-receiver have been under exploration for several years.
In this paper, we study the discrete memoryless cognitive interference channel (DM-CIC) in which the cognitive
transmitter non-causally knows the full message of the
primary transmitter. The capacity of this channel is not known in general; it is only known in some special cases.
Inspired by the concept of less noisy
broadcast channel (BC), in this work we introduce the notion of {\it less noisy}
cognitive interference channel.
Unlike BC, due to the inherent
asymmetry of the cognitive channel, two different less noisy channels are distinguishable; these are named the {\it primary-less-noisy}
and {\it cognitive-less-noisy} channels. We derive capacity region for the latter case
by introducing inner and outer bounds on the capacity
of the DM-CIC and showing that these bounds coincide for the cognitive-less-noisy channel.
%
%the capacity is a subset of previously-known capacity region
%whereas the former gives a new capacity result for the DM-CIC.
Having established the capacity
region, we prove that superposition coding is the optimal encoding technique. % for this class of DM-CIC.%, as it is optimal for less noisy broadcast channel.
\end{abstract}

\IEEEpeerreviewmaketitle

\section{Introduction}\label{sec:intro}
A two-user interference channel (IC) is a network consisting of two
transmitter-receiver pairs, communicating
over the same channel, and thus interfering each other. In certain
communication scenarios, e.g., cognitive radio, one transmitter
(the cognitive transmitter) is able to sense the environment and
obtain side information about the incumbent transmitter (the primary transmitter).
 Such a communication channel is called interference channel
with cognition or simply the {\it cognitive channel}.
Motivated by cognitive radio's promise for increasing the spectral
efficiency in wireless systems, the study of interference channel
with cognitive users has been receiving increasing attention during the past years.

Fundamental limits of the cognitive
interference channel, in which the cognitive transmitter non-causally
knows the the full message of the the primary user, has been studied in
\cite{Devroye,Jovicic-Viswanath,Wu-Vishwanath,Maric1,Maric2,Maric3,Rini2,vaezi2011capacity,JiangZ,vaezi2011superposition,Rini3,liu2009bounds}.
This channel was first introduced in \cite{Devroye} where the authors obtained achievable rates by applying
Gel'fand-Pinsker coding \cite{Gel�fand-Pinsker} to the
celebrated Han-Kobayashi encoding \cite{Han-Kobayashi} for the
IC. The capacity of this channel remains unknown in general; however, it is known
in several special cases, both in the discrete memoryless and Gaussian channels.

Capacity of the Gaussian cognitive interference channel (GCIC) is
known at low interference \cite{Jovicic-Viswanath} and \cite{Wu-Vishwanath},
as well as strong interference  \cite{Maric1}.
Besides, capacity of Gaussian cognitive Z-interference channel (GCZIC) in
which the primary receiver is interfered by the cognitive
transmitter is known for several ranges of interference gain
\cite{vaezi2011capacity,JiangZ,vaezi2011superposition,Rini3}. While at low interference
dirty paper coding \cite{Costa} is capacity-achieving scheme, at high
interference superposition coding is the optimal technique.
For the discrete memoryless channel, capacity
is known for ``strong interference'' \cite{Maric1}, ``weak interference'' \cite{Wu-Vishwanath},
and ``better cognitive decoding'' \cite{Rini2} regimes.
Effectively, superposition coding is the capacity-achieving technique in all above cases
although several other techniques, including rate-splitting, simultaneous coding,
and Gel'fand-Pinsker coding (binning) are used to find achievable
rate regions.

In this paper, we consider the {\it discrete memoryless} cognitive
interference channel (DM-CIC). We first introduce the notion
of {\it less noisy} DM-CIC and show that there are two different less noisy
cognitive channels: the {\it primary-less-noisy} and {\it cognitive-less-noisy} DM-CIC.
In the former, the primary receiver is
less noisy than the secondary receiver, whereas it is the opposite in the latter.

Afterward, we propose two inner bounds for the DM-CIC; one based on superposition coding,
and another one using independent coding.
Obviously, these inner bounds are also valid
for less noisy DM-CIC; in fact, one of these inner bounds is more suitable for the
primary-less-noisy DM-CIC whereas the other one is better for the cognitive-less-noisy  DM-CIC.
We also prove an outer bound on the capacity of this channel.

Finally, we show that for the cognitive-less-noisy DM-CIC
the inner and outer bounds coincide, and therefore we establish the capacity region for this class of
DM-CIC. This proves that superposition coding is the capacity-achieving scheme in the
less noisy DM-CIC, as it is in the less noisy BC.
Although for the primary-less-noisy DM-CIC capacity remains unknown, corresponding
inner bound simplifies to an achievable region that has already been proved to be capacity-achieving in the special case of GCZIC
\cite{vaezi2011capacity}, \cite{vaezi2011superposition}.

This paper is organized as follows. In Section \ref{sec:models},
we introduce the system model and define the less noisy DM-CIC.
In Section \ref{sec:DM-CIC}, we propose an outer bound and two inner bounds
for the DM-CIC. Then, in Section \ref{sec:cap}, we show
that one of the inner bounds is tight for the cognitive-less-noisy channel,
and thus provides capacity for this class of the DM-CIC.
New capacity result is compared with the existing ones in Section~\ref{sec:dis}.

%%%%%%%%%%%%%%%%%%%%%%%%%%%%%%%%%%%%%%%%%%%%%%%%%%%%%%%%%%%%%%%%%%%%%%%%%%%%%%%%%%%%%%%%%%%%%%%%%%%%%%%%%%%%%%%%%%%%%%%%%%%%%%%%%%%%%%%%%%%
%%%%%%%%%%%%%%%%%%%%%%%%%%%%%%%%%%%%%%%%%%%%%%%%%%%%%%%%%%%%%%%%%%%%%%%%%%%%%%%%%%%%%%%%%%%%%%%%%%%%%%%%%%%%%%%%%%%%%%%%%%%%%%%%%%%%%%%%%%%

\section{Problem Setup and Definitions}
\label{sec:models}

The two-user discrete memoryless cognitive interference channel
(DM-CIC) is an interference channel \cite{Carleial} that consists of two
transmitter-receiver pairs, in which one transmitter (the cognitive user)
knows the message of the other transmitter (the primary one), in addition to its own message.
In what follows, we formally
define this channel and a special class of that.

\subsection{Discrete Memoryless Cognitive Interference Channel}

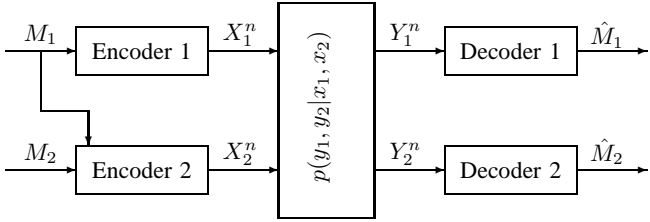
\begin{figure}
\begin{center}
\scalebox {.9}{
\begin{picture}(280,170)
\put(0,50){
\begin{picture}(200,80)
\put(20,20){\framebox(55,20){Encoder 2}}
\put(-10,30){\vector(1,0){30}}
\put(12,37){\makebox(0,0)[r]{$M_{2}$}}
%\put(-55,20){\framebox(45,20){Source 2}}
\put(75,30){\vector(1,0){30}}
\put(96,37){\makebox(0,0)[r]{$X^n_{2}$}}
\put(105,10){\framebox(40,90)}
\put (117,25){\rotatebox{90}{$p(y_1,y_2|x_1,x_2)$}}
\put(145,30){\vector(1,0){30}}
\put(165,37){\makebox(0,0)[r]{$Y^n_{2}$}}
\put(175,20){\framebox(55,20){Decoder 2}}
\put(230,30){\vector(1,0){30}}
\put(250,38){\makebox(0,0)[r]{$\hat{M}_{2}$}}
\end{picture}
}

\put(0,100){
\begin{picture}(200,80)
\put(20,20){\framebox(55,20){Encoder 1}}
\put(-10,30){\vector(1,0){30}}
\put(12,37){\makebox(0,0)[r]{$M_{1}$}}
%\put(-55,20){\framebox(45,20){Source 1}}
\put(75,30){\vector(1,0){30}}
\put(96,37){\makebox(0,0)[r]{$X^n_{1}$}}
%\put(105,10){\framebox(75,90){$p(y_1,y_2|x_1,x_2)$}}
\put(145,30){\vector(1,0){30}}
\put(165,37){\makebox(0,0)[r]{$Y^n_{1}$}}
\put(175,20){\framebox(55,20){Decoder 1}}
\put(230,30){\vector(1,0){30}}
\put(250,37){\makebox(0,0)[r]{$\hat{M}_{1}$}}
\put(5,30){\line(0,-1){25}} \put(5,5){\line(1,0){20}}
\put(25,5){\vector(0,-1){15}}
\end{picture}
}
\end{picture}
}
\end{center}

\vspace{-35pt}

\caption{The discrete memoryless cognitive
interference channel (DM-CIC) with two transmitters and two receivers.  $M_1, M_2$ are two messages,
$X_1, X_2$ are inputs, $Y_1, Y_2$ are outputs, and $p(y_1,y_2|x_1,x_2)$ is the transition probability of channel.}
  \label{fig:DM-CZIC}
\end{figure}

The discrete memoryless cognitive interference channel
(DM-CIC) is depicted in Figure~\ref{fig:DM-CZIC}. Let $M_1$ and $M_2$
be two independent messages which are uniformly distributed on the set of
all messages for the first and second users, respectively.  Transmitter $i, i\in \{1,2\},$ wishes to transmit message $M_i$
to receiver $i$, in $n$ channel use at rate $R_i$. Message $M_2$ is available only at transmitter 2, while both
transmitters know $M_1$. This channel is defined by a tuple $({\cal
X}_1,{\cal X}_2;p(y_1,y_2|x_1,x_2);{\cal Y}_1,{\cal Y}_2)$ where
${\cal X}_1,{\cal X}_2$ and ${\cal Y}_1,{\cal Y}_2$
are input and output alphabets, and $p(y_1,y_2|x_1,x_2)$ is channel transition probability density
functions.

%A special case of this channel, known as discrete memoryless cognitive
%Z-interference channel (DM-CZIC), arises when channel transition probability can be
%factorized as $p(y_1,y_2|x_1,x_2) = p(y_2|x_2)p(y_1|x_1,x_2)$. In such a case,
%interference is one sided. More precisely, the primary
%user does not interfere the secondary one. Similarly, if the interference
%link from the cognitive transmitter to the primary receiver does not exist a different DM-CZIC
%is realized; this has been studied in \cite{liu2009bounds}; we will not consider it
%in this paper.

The capacity of the DM-CIC is known in ``strong interference'' \cite{Maric1},
``weak interference'' \cite{Wu-Vishwanath}, and ``better cognitive decoding'' \cite{Rini2} regimes.
These capacity results are listed in Table~\ref{table1}, and labeled $\mathcal C_1$, $\mathcal C_2$,
and $\mathcal C_3$, respectively.
In the first case, both receivers can decode both messages. In all above cases,
the cognitive receiver has a better condition (more information) than the primary one, in some sense,
as it is evident from corresponding conditions in Table~\ref{table1}.
%This is also correct in the first case.

\subsection{Less Noisy DM-CIC}

Since the second transmitter has complete and non-causal
knowledge of both messages, it can act like a
BC transmitter.
Particularly, in the absence of the first transmitter this channel becomes the
well-known DM-BC \cite{Cover}. In the presence of that, this
channel is no longer a BC; however, one can define
conditions, similar to that in the BC, showing that one receiver is in a ``better''
condition than the other to decode the messages, e.g., one receiver is {\it less noisy}
or {\it more capable} than the other \cite{ElGamal}, \cite{ElGamalMoreCapable}.

In \cite{vaezi2011capacity}, \cite{vaezi2011superposition}, the authors extended this
notion to the DM-CIC, and studied
the case where the primary receiver is more capable than the secondary receiver.
This led to the capacity of the GCZIC at very strong interference.
In what follows, we introduce the notion of less noisy cognitive interference channel, and show that
two different less noisy DM-CIC arises, depending on which receiver is in a better condition.
These are formally defined in the following.

\begin{defn}
The DM-CIC is said to be {\it primary-less-noisy} if
\begin{align}
I(U;Y_1) \geq I(U;Y_2)
\label{eq:defn1}
\end{align}
\label{cond1}
 for all $p(u,x_1,x_2)$.
\end{defn}

\begin{defn}
The DM-CIC is said to be {\it cognitive-less-noisy} if
\begin{align}
I(U;Y_2) \geq I(U;Y_1)
\label{eq:defn2}
\end{align}
\label{cond2}
 for all $p(u,x_1,x_2)$.
\end{defn}

\noindent It is clear that in the first case the primary
receiver is less noisy than the cognitive receiver whereas in the second case
the cognitive receiver is less noisy than the primary receiver.
Therefore, given the channel condition, a DM-CIC can be {\it primary-less-noisy}, {\it cognitive-less-noisy},
 neither of them or both.

%In the next section, we establish two inner bounds and an outer
% bound for the DM-CIC.
%%Later, we will use these bounds to prove the capacity of the
%DM-CIC under specific conditions.

\section{Inner and Outer Bounds for the DM-CIC }
\label{sec:DM-CIC}

In this section, we first introduce an
outer bound on the capacity of the DM-CIC; we then derive two achievable rate regions
for this channel.
The first achievable region is based on superposition coding technique;
%used for achievability is the same as capacity achieving technique for the conventional more capable DM-BC.
it is inspired by the capacity-achieving superposition coding in the less noisy and more capable DM-BC,
or the inner bound introduced for the more capable DM-CIC in \cite{vaezi2011superposition}.
The idea of outer bound also comes from the capacity of the
less noisy DM-BC. However, we combine two different bounds to find a unified one.

\subsection{A Unified Outer Bound}
Inspired by capacity of less noisy BC \cite{ElGamal}, and definitions \eqref{eq:defn1} and
\eqref{eq:defn2} for less noisy cognitive interference channels,
 we present a simple outer bound on the capacity of the DM-CIC. This outer bound
 is in fact a combination of two simpler outer bounds as we describe later in this section.
 Each outer bound can be tight in specific cases of less noisy DM-CIC, as it will be shown later.

The following provides an outer bound on the capacity of the DM-CIC.
\begin{thm}
The union of rate pairs $(R_{1},R_{2})$ such that
\begin{align}
  R_1 &\leq I(U;Y_1),  \label{eq:O-1}  \\
  R_2 &\leq I(V;Y_2),  \label{eq:O-2} \\
  R_1 + R_2 &\leq I(X_2;Y_2|U) + I(U;Y_1), \label{eq:O-3} \\
  R_1 + R_2 &\leq I(X_1;Y_1|V) + I(V;Y_2), \label{eq:O-4}
\end{align}
for some joint distribution $p(u,v,x_1,x_2)$ gives an
outer bound on the capacity region of the DM-CIC.
\label{thm1}
\end{thm}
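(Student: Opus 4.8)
The plan is to establish the four inequalities by a standard converse argument: assume a sequence of $(2^{nR_1},2^{nR_2},n)$ codes with vanishing error probability, invoke Fano's inequality for both decoders, and then manipulate the resulting entropy terms until auxiliary random variables satisfying the single-letter bounds emerge. First I would fix the auxiliary identifications. Since receiver~1 must decode $M_1$ and receiver~2 must decode $M_2$ (and, in the cognitive channel, $X_2$ depends on both messages), the natural single-letter choices are $U_i \triangleq (M_1, Y_1^{i-1})$ and $V_i \triangleq (M_2, Y_2^{i-1})$, possibly after also conditioning on a genie-aided ``future'' block such as $Y_2^{n}_{i+1}$ in the Csiszár–Körner sum-identity step. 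With these definitions $U_i \to (X_{1i},X_{2i}) \to (Y_{1i},Y_{2i})$ and $V_i \to (X_{1i},X_{2i}) \to (Y_{1i},Y_{2i})$ form Markov chains, and the joint distribution on $(U,V,X_1,X_2)$ that one reads off after introducing a uniform time-sharing index has exactly the form $p(u,v,x_1,x_2)$ required in the theorem.

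The bounds \eqref{eq:O-1} and \eqref{eq:O-2} are the easy ones. From Fano, $nR_1 \le I(M_1;Y_1^n) + n\epsilon_n$; writing $I(M_1;Y_1^n)=\sum_i I(M_1;Y_{1i}\mid Y_1^{i-1}) \le \sum_i I(M_1,Y_1^{i-1};Y_{1i}) = \sum_i I(U_i;Y_{1i})$ gives \eqref{eq:O-1} after the time-sharing step; \eqref{eq:O-2} is symmetric with $M_2$, $Y_2^n$, and $V_i$. For the sum-rate bound \eqref{eq:O-3}, I would start from $n(R_1+R_2) \le I(M_1;Y_1^n) + I(M_2;Y_2^n) + n\epsilon_n$, bound the first term as above by $\sum_i I(U_i;Y_{1i})$, and for the second term use a Csiszár–Körner-type telescoping: $I(M_2;Y_2^n) \le I(M_2;Y_2^n\mid M_1) = \sum_i I(M_2; Y_{2i}\mid M_1, Y_2^{i-1}) \le \sum_i I(X_{2i}; Y_{2i}\mid U_i)$, where the last step uses that $U_i=(M_1,Y_1^{i-1})$ (after the appropriate genie handling) and the Markov chain $M_2 \to (U_i,X_{2i}) \to Y_{2i}$. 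This yields $n(R_1+R_2) \le \sum_i \bigl[ I(X_{2i};Y_{2i}\mid U_i) + I(U_i;Y_{1i})\bigr] + n\epsilon_n$, which is \eqref{eq:O-3}; inequality \eqref{eq:O-4} follows by interchanging the roles of the two users and the two auxiliaries $U$ and $V$. Finally, introduce the standard uniform time-sharing random variable $Q$, absorb it into $U$ and $V$ so that a single joint distribution $p(u,v,x_1,x_2)$ supports all four inequalities simultaneously, and let $n\to\infty$ so that $\epsilon_n\to 0$.

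The step I expect to be the main obstacle is making the single-letterization of $I(M_2;Y_2^n\mid M_1)$ (and its mirror image) rigorous while keeping the \emph{same} $U_i$, $V_i$ that appear in \eqref{eq:O-1}–\eqref{eq:O-2}. In the less-noisy BC converse one typically needs a genie term — conditioning on $Y_2^{n}_{i+1}$ or $Y_1^{n}_{i+1}$ — and then uses the Csiszár sum identity to swap it into the other receiver's past; here the asymmetry of the cognitive channel (only transmitter~2 knows $M_2$, both know $M_1$) means one has to check carefully that the genie does not destroy the Markov structure $U_i\to(X_{1i},X_{2i})\to(Y_{1i},Y_{2i})$ needed elsewhere, and that the leftover terms telescope to zero rather than contributing an extra nonnegative quantity. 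A secondary subtlety is verifying that a single distribution $p(u,v,x_1,x_2)$ can serve all four bounds at once after time-sharing; this works because $U_i$ and $V_i$ are defined from disjoint parts of the problem ($M_1,Y_1^{i-1}$ versus $M_2,Y_2^{i-1}$) together with the common index $i$, so no inconsistency arises. Everything else is bookkeeping with chain rules, the data-processing inequality, and the non-negativity of conditional mutual information.
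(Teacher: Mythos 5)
There is a genuine gap, and it sits exactly at the step you flag as "the main obstacle": your choice of auxiliaries does not support the sum-rate bounds. With $U_i=(M_1,Y_1^{i-1})$ the individual bound \eqref{eq:O-1} indeed follows, but the chain $I(M_2;Y_2^n\mid M_1)=\sum_i I(M_2;Y_{2i}\mid M_1,Y_2^{i-1})\le\sum_i I(X_{2i};Y_{2i}\mid U_i)$ is not justified: the left side is conditioned on receiver 2's past $Y_2^{i-1}$ while $U_i$ contains receiver 1's past $Y_1^{i-1}$, and the Markov chain $M_2\to(U_i,X_{2i})\to Y_{2i}$ only yields $I(M_2;Y_{2i}\mid U_i)\le I(X_{2i};Y_{2i}\mid U_i)$, which is a different quantity. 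Conditioning cannot be swapped for free (conditioning may increase or decrease mutual information), and if you try to insert $Y_1^{i-1}$ into the conditioning you generate a leftover cross term $\sum_i I(Y_1^{i-1};Y_{2i}\mid M_1,Y_2^{i-1})$ with nothing to cancel it against, because a pure-past auxiliary has no matching "future" term for the Csisz\'ar sum identity. The identity only produces the needed cancellation when the auxiliary mixes one receiver's past with the other receiver's future.

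The paper's proof (following the converse for the more capable BC and the more capable DM-CIC outer bound) takes $U_i=(M_1,Y_2^{i-1},Y^n_{1,i+1})$ and $V_i=(M_2,Y_1^{i-1},Y^n_{2,i+1})$ — note: the \emph{other} receiver's past together with the \emph{own} receiver's future, not the identifications you wrote, nor your hedged variant. With this choice one bounds $I(M_1;Y_1^n)\le\sum_i\bigl[I(U_i;Y_{1i})-I(Y_2^{i-1};Y_{1i}\mid M_1,Y^n_{1,i+1})\bigr]$ and $I(M_2;Y_2^n\mid M_1)\le\sum_i\bigl[I(X_{2i};Y_{2i}\mid U_i)+I(Y^n_{1,i+1};Y_{2i}\mid M_1,Y_2^{i-1})\bigr]$, and the two cross terms cancel exactly by the Csisz\'ar sum identity, giving \eqref{eq:O-3}; \eqref{eq:O-4} is symmetric with $V_i$. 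So your overall skeleton (Fano, chain rules, time-sharing, an appeal to the sum identity) is the right one, but the decisive ingredient — the correct mixed past/future auxiliaries and the explicit cancellation — is missing, and as stated your single-letterization step fails. A secondary remark: since $U_i$ and $V_i$ defined above share $Y$-blocks, they are correlated, which is why the theorem is stated over a general joint $p(u,v,x_1,x_2)$; your "disjoint parts" argument is unnecessary (and not quite accurate for the correct auxiliaries), though it does not affect validity since only some joint distribution is required.
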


\begin{proof}
The proof of the second and last inequalities follows the same line of argument
 as in the outer bound of the more capable DM-CIC \cite[Theorem 2]{vaezi2011superposition}, or similarly
 the converse of the more capable BC \cite{ElGamalMoreCapable}. The other two inequalities, by symmetry, follow the
same line of proof.
The essence of the proof in \eqref{eq:O-3} and \eqref{eq:O-4} is to use the Csiszar sum identity
and the auxiliary random variables $U_i = (M_1, Y^{i-1}_2,Y^n_{1,i+1})$ and
$V_i = (M_2, Y^{i-1}_{1}, Y^n_{2,i+1})$. The choice of $U_i, V_i$ indicates that they are
 correlated; hence, the outer bound is over the joint distribution $p(u,v)p(x_1,x_2|u,v)p(y_1,y_2|x_1,x_2)$.
\end{proof}

The symmetry of the outer bound indicates how it
consists of two simpler outer bounds. One including \eqref{eq:O-1} and \eqref{eq:O-3}, and
the other including \eqref{eq:O-2} and \eqref{eq:O-4}.
Each outer bound is resembling the capacity of less noisy DM-BC \cite{ElGamal}.
%Note that Theorem~\ref{thm1} is valid for any DM-CIC.
%, hence, it is also valid for any special class of this channel.

\subsection{New Achievable Rate Regions}
\label{inner}
We next provide two achievable rate regions for the DM-CIC.
The first achievable region uses superposition encoding at the
cognitive transmitter whereas the second one encodes independently.
The decoding is based on the joint typicality in both cases.

\begin{thm}
The union of rate regions given by
\begin{equation}
\begin{aligned}
    R_1 &\leq I(W,X_1;Y_1),   \\
R_2 &\leq I(X_2;Y_2|W,X_1),    \\
  R_1 + R_2 &\leq I(X_1,X_2;Y_2),
  \label{eq:inner2}
\end{aligned}
\end{equation}
is achievable for the DM-CIC, where the union is over all probability distributions $p(w,x_1, x_2)$.
\label{thm2}
\end{thm}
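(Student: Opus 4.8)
The plan is to prove \eqref{eq:inner2} by a standard random-coding argument on a superposition codebook that respects the asymmetry of the channel: transmitter~1 knows only $M_1$, while the cognitive transmitter~2 knows $(M_1,M_2)$. Fix a pmf $p(w,x_1,x_2)$. For the codebook, for each $m_1\in[1:2^{nR_1}]$ draw $w^n(m_1)\sim\prod_{i=1}^n p(w_i)$; conditioned on $w^n(m_1)$ draw $x_1^n(m_1)\sim\prod_{i=1}^n p(x_{1,i}|w_i(m_1))$; and for each $m_2\in[1:2^{nR_2}]$ draw $x_2^n(m_1,m_2)\sim\prod_{i=1}^n p(x_{2,i}|w_i(m_1),x_{1,i}(m_1))$. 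To send $(m_1,m_2)$, transmitter~1 transmits $x_1^n(m_1)$ (it needs only $m_1$) and transmitter~2 transmits $x_2^n(m_1,m_2)$. The two layers seen through the channel to receiver~1 both depend only on $m_1$, so receiver~1 faces an effective point-to-point link with ``input'' $(W,X_1)$.

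For decoding, receiver~1 declares $\hat m_1$ if it is the unique index with $\big(w^n(\hat m_1),x_1^n(\hat m_1),Y_1^n\big)$ jointly typical, and receiver~2 declares $(\hat m_1,\hat m_2)$ if it is the unique pair with $\big(w^n(\hat m_1),x_1^n(\hat m_1),x_2^n(\hat m_1,\hat m_2),Y_2^n\big)$ jointly typical. In the error analysis I would condition on $(M_1,M_2)=(1,1)$ and first dispose, via the law of large numbers, of the event that the transmitted codewords fail to be jointly typical with the respective outputs. At receiver~1 the only remaining error event is that some $m_1'\neq 1$ has $\big(w^n(m_1'),x_1^n(m_1'),Y_1^n\big)$ jointly typical; since $\big(w^n(m_1'),x_1^n(m_1')\big)$ is independent of $Y_1^n$, the packing lemma bounds this by $2^{nR_1}2^{-n(I(W,X_1;Y_1)-\delta)}$, which vanishes if $R_1<I(W,X_1;Y_1)$.

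At receiver~2 two error events remain. If the cloud index is correct but $m_2'\neq 1$, only $x_2^n(1,m_2')$ is resampled, conditionally on the true $\big(w^n(1),x_1^n(1)\big)$, so by the conditional packing lemma the probability is at most $2^{nR_2}2^{-n(I(X_2;Y_2|W,X_1)-\delta)}$, giving $R_2<I(X_2;Y_2|W,X_1)$. If instead $m_1'\neq 1$ (with $m_2'$ arbitrary), the entire triple $\big(w^n(m_1'),x_1^n(m_1'),x_2^n(m_1',m_2')\big)$ is generated from fresh randomness and is therefore independent of $Y_2^n$; summing over the at most $2^{n(R_1+R_2)}$ such triples, the probability is at most $2^{n(R_1+R_2)}2^{-n(I(W,X_1,X_2;Y_2)-\delta)}$. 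Since the construction makes $W-(X_1,X_2)-Y_2$ a Markov chain, $I(W,X_1,X_2;Y_2)=I(X_1,X_2;Y_2)$, so this vanishes provided $R_1+R_2<I(X_1,X_2;Y_2)$.

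Collecting the three conditions, the average probability of error over the ensemble tends to zero as $n\to\infty$ for every $(R_1,R_2)$ strictly inside the region; the usual expurgation then yields a single code with small maximal error, and letting $\delta\to0$, taking closures, and then the union over $p(w,x_1,x_2)$ gives \eqref{eq:inner2}. I do not expect a genuine obstacle; the only points that require care are (i) checking that the ``wrong-cloud'' triples at receiver~2 are independent of $Y_2^n$, so that the ordinary (rather than conditional) packing lemma applies there, and (ii) the identity $I(W,X_1,X_2;Y_2)=I(X_1,X_2;Y_2)$, which is exactly where the layered structure $p(w)\,p(x_1|w)\,p(x_2|w,x_1)$ is used.
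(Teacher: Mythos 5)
Your proposal is correct and follows essentially the same route as the paper: superposition coding with $(W,X_1)$ as cloud center and $X_2$ as satellite, joint-typicality decoding, and the packing lemma yielding the three constraints, with the Markov chain $W \rightarrow (X_1,X_2) \rightarrow Y_2$ giving $I(W,X_1,X_2;Y_2)=I(X_1,X_2;Y_2)$. The only cosmetic differences are that you generate $x_1^n$ conditionally on $w^n$ rather than drawing $(w^n,x_1^n)$ jointly, and your receiver~2 decodes the pair $(m_1,m_2)$ uniquely whereas the paper requires uniqueness only in $m_2$; both choices lead to the same rate region.
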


\begin{proof}
The proof of Theorem~\ref{thm2} uses the superposition coding idea in which
$Y_1$ can only decode $M_1$
while $Y_2$ is intended to decode both $M_1$ and $M_2$.
Considering the space of all codewords, one can
view the $(W, X_1)$ as {\it cloud centers}, and the $X_2$ as {\it satellites} \cite{Kramer}.
For completeness, the details of the proof are provided in Section~\ref{anx1}.
\end{proof}
In light of the above discussion, we expect the encoding scheme in Theorem~\ref{thm2}
be more favorable when the second receiver is in a better situation than the first one, because it
can decode both cloud centers and satellites. If the channel condition
is the reverse, i.e., the first receiver has a better situation than the second receiver,
it makes sense to reverse the order of encoding. However, at the first transmitter, we cannot do superposition encoding
against the codeword of the secondary transmitter because the first transmitter does not know the massage of
the cognitive user. As a result, the input distribution needs to be independent
as proposed in the following theorem.

\begin{thm}
The union of rate regions given by
\begin{equation}
\begin{aligned}
  R_1 &\leq I(X_1;Y_1|W,X_2),   \\
  R_2 &\leq I(W,X_2;Y_2),  \\
  R_1 + R_2 &\leq I(X_1,X_2;Y_1),
  \label{eq:inner1}
\end{aligned}
\end{equation}
is achievable for the DM-CIC, where the union is over all probability distributions $p(w,x_1, x_2)$
that factors as $p(w,x_2)p(x_1)$.
\label{thm3}
\end{thm}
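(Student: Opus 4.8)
The plan is to mirror the random-coding argument of Theorem~\ref{thm2} but with the roles of the two receivers interchanged and with an \emph{independent} codebook at the primary transmitter, since—as the paragraph before the statement explains—transmitter~1 does not know $M_2$ and therefore cannot superimpose its codeword on that of transmitter~2. Concretely, fix a distribution $p(w,x_2)p(x_1)$. Generate $2^{nR_2}$ cloud centers: for each $m_2$, draw $(W^n,X_2^n)(m_2)$ i.i.d.\ according to $p(w,x_2)$. Independently generate $2^{nR_1}$ codewords $X_1^n(m_1)$ i.i.d.\ according to $p(x_1)$; note these are \emph{not} satellites hung off the cloud centers, because $X_1$ is generated without reference to $(W,X_2)$. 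The cognitive transmitter, knowing both $m_1$ and $m_2$, sends $X_2^n(m_2)$; the primary transmitter sends $X_1^n(m_1)$.

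For decoding, receiver~2 plays the role that receiver~1 played in Theorem~\ref{thm2}: it looks for the unique $\hat m_2$ (equivalently the unique cloud center $(W^n,X_2^n)$) jointly typical with $Y_2^n$, treating $X_1^n$ as part of the channel noise; this succeeds w.h.p.\ provided $R_2 \le I(W,X_2;Y_2) - \epsilon$. Receiver~1, which is the ``better'' receiver here, decodes \emph{both} messages: it looks for the unique pair $(\hat m_1,\hat m_2)$ with $\bigl(W^n(\hat m_2),X_1^n(\hat m_1),X_2^n(\hat m_2),Y_1^n\bigr)$ jointly typical. The standard packing-lemma error analysis, splitting into the events (i) wrong $m_1$ but right $m_2$, and (ii) wrong $m_2$ (which drags $W^n,X_2^n$ along and hence effectively wrong $m_1$ too), yields the constraints $R_1 \le I(X_1;Y_1\mid W,X_2) - \epsilon$ from event~(i) and $R_1+R_2 \le I(X_1,X_2;Y_1) - \epsilon$ from event~(ii); here the independence $p(x_1)$ and the fact that $W$ is a deterministic-free auxiliary attached to $X_2$ let the mutual information in the sum-rate bound be written as $I(X_1,X_2;Y_1)$ rather than $I(W,X_1,X_2;Y_1)$, since $W \to (X_1,X_2) \to Y_1$ forms a Markov chain. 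Letting $n\to\infty$ and $\epsilon\to 0$ and invoking time-sharing/convexity of the union over $p(w,x_2)p(x_1)$ gives the claimed region.

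The main obstacle, and the only place where care beyond Theorem~\ref{thm2}'s template is needed, is bookkeeping the \emph{independence} constraint $p(w,x_1,x_2)=p(w,x_2)p(x_1)$ throughout the error events: because $X_1^n$ is generated independently of the cloud, receiver~1's joint-typicality test for a wrong $m_1$ involves an $X_1^n$ independent of the \emph{correct} $(W^n,X_2^n,Y_1^n)$, so the exponent is governed by $I(X_1;Y_1,W,X_2)=I(X_1;Y_1\mid W,X_2)$ (using $X_1 \perp (W,X_2)$), and one must confirm that this is exactly the first inequality in \eqref{eq:inner1}. One should also double-check that the sum-rate event, where both $X_1^n$ and the cloud $(W^n,X_2^n)$ are wrong and mutually independent, contributes $I(X_1;Y_1)+I(W,X_2;Y_1) \ge I(X_1,X_2;Y_1)$ as a \emph{looser} bound than the stated one, so that \eqref{eq:inner1} is the binding region—this is where the hypothesis that receiver~1 is the stronger receiver is implicitly being used, and it is worth a sentence of justification rather than a routine citation. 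Everything else—codebook generation, the covering step at transmitter~2's decoder, and the final Fourier–Motzkin-free presentation—follows the now-standard superposition arguments, so I would keep the full write-up brief and defer the routine parts to the appendix, paralleling the treatment promised for Theorem~\ref{thm2} in Section~\ref{anx1}.
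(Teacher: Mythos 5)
Your codebook (cloud codewords $(W^n,X_2^n)(m_2)$ drawn i.i.d.\ from $p(w,x_2)$ and an independent codebook $X_1^n(m_1)$ drawn from $p(x_1)$) is essentially the paper's construction, but your decoding rule at receiver~1 creates a genuine gap. You ask receiver~1 to find the \emph{unique pair} $(\hat m_1,\hat m_2)$ that is jointly typical, and you then dismiss the event ``correct $m_1$, wrong $m_2$'' by asserting that a wrong $m_2$ ``drags $W^n,X_2^n$ along and hence effectively wrong $m_1$ too.'' That is false in this codebook: $X_1^n$ is indexed only by $m_1$ and is \emph{not} a satellite of the cloud, so the event that $(W^n(\hat m_2),X_1^n(1),X_2^n(\hat m_2),Y_1^n)$ is typical for some $\hat m_2\neq 1$ together with the \emph{correct} $X_1^n(1)$ is a legitimate error under your rule. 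The packing lemma applied to it adds the constraint $R_2\le I(W,X_2;Y_1\mid X_1)=I(X_2;Y_1\mid X_1)$, which does not appear in \eqref{eq:inner1} and genuinely shrinks the region, since Theorem~\ref{thm3} is stated for an arbitrary DM-CIC with no ordering between the receivers. The repair, and what the paper's decoder does, is to require uniqueness only in the first index: receiver~1 declares $\hat m_1$ if it is the unique $m_1$ such that $x_1^n(\hat m_1)$ is jointly typical with $y_1^n$ and \emph{some} cognitive codeword (non-unique decoding of $m_2$). Then only two error classes remain: wrong $m_1$ with the true cloud, giving $R_1\le I(X_1;Y_1,W,X_2)=I(X_1;Y_1\mid W,X_2)$ by the independence $p(x_1)p(w,x_2)$, and wrong $m_1$ with a wrong cloud, giving $R_1+R_2\le I(W,X_1,X_2;Y_1)=I(X_1,X_2;Y_1)$ by the Markov chain $W\rightarrow (X_1,X_2)\rightarrow Y_1$; these are exactly the bounds in \eqref{eq:inner1}. (Your treatment of receiver~2, which requires uniqueness only in $\hat m_2$, is fine.)

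Your ``double-check'' of the sum-rate event is also off. When both indices are wrong, the triple $(X_1^n,W^n,X_2^n)$ is drawn from the product distribution $p(x_1)p(w,x_2)$---which, by the factorization hypothesis of the theorem, is the true input distribution---and is independent of $Y_1^n$, so the packing lemma gives the exponent $I(X_1,W,X_2;Y_1)=I(X_1,X_2;Y_1)$ directly; no detour through $I(X_1;Y_1)+I(W,X_2;Y_1)$ is needed. Indeed, for independent inputs one has $I(X_1;Y_1)+I(W,X_2;Y_1)\le I(X_1,X_2;Y_1)$, the reverse of the inequality you wrote, and no hypothesis that receiver~1 is ``stronger'' is used or available in Theorem~\ref{thm3}: the less-noisy condition enters only afterwards (Corollary~\ref{cor2}) to remove the sum-rate constraint.
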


\begin{proof}
The proof of Theorem~\ref{thm3} uses independent encoding of $X_1$ and $(W, X_2)$; however,
$Y_1$ is intended to decode both messages whereas $Y_2$ can only decode $M_2$.
The proof of Theorem~\ref{thm3} follows a similar footsteps as Theorem~\ref{thm2}, but the input distributions are different.
The details of the proof can be found in Section~\ref{anx2}.
\end{proof}

%Again, we should highlight that the achievable rate regions in Theorem~\ref{thm2} and Theorem~\ref{thm3} both are
%for any DM-CIC. The inner and outer bounds provided in this section are used in the following section to prove the capacity of the DM-CIC in special scenarios.

%
%
%\begin{thm}
%The union of rate regions given by
%\begin{equation}
%\begin{aligned}
%    R_1 &\leq I(U;Y_1)   \\
%    R_2 &\leq I(X_2;Y_2|U)    \\
%    R_1 + R_2 &\leq I(X_1,X_2;Y_2)\\
%    R_1 + R_2 &\leq I(U,X_2;Y_2) (or maybe)
%  \label{eq:inner2}
%\end{aligned}
%\end{equation}
%is achievable for the DM-CIC, where the union is over all probability distributions $p(u,x_1, x_2)$.
%\label{thm6}
%\end{thm}

\section{The Capacity of Less Noisy DM-CIC}
\label{sec:cap}
In this section, we simplify the inner bounds in Theorem~\ref{thm2} and Theorem~\ref{thm3}, respectively for
 the cognitive-less-noisy and primary-less-noisy DM-CIC defined in \eqref{eq:defn1} and \eqref{eq:defn2}.
 Then, by comparing the fist inner bound with the outer bound in Theorem~\ref{thm1}, we establish capacity region for the cognitive-less-noisy DM-CIC.
 \subsection{The Cognitive-less-noisy DM-CIC}
\label{sec:cap1}
\begin{thm}
For the cognitive-less-noisy DM-CIC, the capacity region is given by the set of all rate pairs $(R_1, R_2)$ such that
\begin{align}
  R_1 &\leq I(U;Y_1),\\
  R_2 &\leq I(X_2;Y_2|U),
  \label{eq:cap1}
\end{align}
for some $p(u,x_2)$.
\label{thm4}
\end{thm}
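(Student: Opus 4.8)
The plan is to establish the capacity region for the cognitive-less-noisy DM-CIC by sandwiching it between the inner bound of Theorem~\ref{thm2} and the outer bound of Theorem~\ref{thm1}, showing that both collapse to the claimed region under the defining condition \eqref{eq:defn2}. For achievability, I would start from the region in \eqref{eq:inner2} and simplify it. Setting $U = (W, X_1)$, the first constraint becomes $R_1 \leq I(U; Y_1)$ and the second becomes $R_2 \leq I(X_2; Y_2 | U)$ directly. The work is to dispense with the sum-rate constraint $R_1 + R_2 \leq I(X_1, X_2; Y_2)$: I would argue that under \eqref{eq:defn2} this constraint is implied by the other two. Indeed, $I(X_1, X_2; Y_2) = I(U; Y_2) + I(X_2; Y_2 | U)$ by the chain rule (since $U = (W, X_1)$ and conditioning on $U$ makes $W$ redundant), and by the cognitive-less-noisy hypothesis $I(U; Y_2) \geq I(U; Y_1) \geq R_1$; adding $R_2 \leq I(X_2; Y_2 | U)$ gives $R_1 + R_2 \leq I(X_1, X_2; Y_2)$, so the sum constraint is inactive. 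Hence every rate pair in the claimed region is achievable, after noting that restricting to distributions of the form $p(u, x_2)$ (equivalently, a single auxiliary absorbing $W$ and $X_1$) loses nothing, since $Y_1$ only decodes $M_1$ and $X_1$ plays no separate role beyond being part of the cloud center.

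For the converse, I would invoke Theorem~\ref{thm1} with the particular choice $V = X_2$, which is legitimate since that theorem's region is a union over all $p(u, v, x_1, x_2)$. Then \eqref{eq:O-2} reads $R_2 \leq I(X_2; Y_2)$ and \eqref{eq:O-4} reads $R_1 + R_2 \leq I(X_1; Y_1 | X_2) + I(X_2; Y_2)$, while \eqref{eq:O-1} gives $R_1 \leq I(U; Y_1)$ and \eqref{eq:O-3} gives $R_1 + R_2 \leq I(X_2; Y_2 | U) + I(U; Y_1)$. The last of these is exactly the sum of the two target inequalities, so any outer-bound point satisfies them jointly only if we can also show the target region's individual constraints are not more restrictive than needed; more carefully, I would show directly that the outer region is \emph{contained in} the target region by checking that \eqref{eq:O-1} and \eqref{eq:O-3} together, under \eqref{eq:defn2}, force $R_2 \leq I(X_2; Y_2 | U)$. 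From \eqref{eq:O-3}, $R_2 \leq I(X_2; Y_2 | U) + I(U; Y_1) - R_1 \leq I(X_2; Y_2 | U)$ whenever $R_1 \geq I(U; Y_1)$ fails to be slack, but in general one uses \eqref{eq:O-1}: combining $R_1 \leq I(U;Y_1)$ with \eqref{eq:O-3} yields $R_2 \leq I(X_2;Y_2|U) + I(U;Y_1) - R_1$, which is at least $I(X_2;Y_2|U)$ — the wrong direction. The clean route is instead: an outer point has $R_1 \leq I(U; Y_1)$ for some $U$, and $R_1 + R_2 \leq I(X_2; Y_2|U) + I(U; Y_1)$ for the \emph{same} $U$; so $R_2 \leq I(X_2; Y_2 | U) + \big(I(U;Y_1) - R_1\big)$. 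This alone does not give $R_2 \le I(X_2;Y_2|U)$, so I must tighten the outer bound using the less-noisy property before comparing.

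The main obstacle, therefore, is the converse direction: showing the outer bound of Theorem~\ref{thm1} genuinely shrinks to the two-inequality region under \eqref{eq:defn2}, which I expect requires re-examining the converse argument rather than black-boxing Theorem~\ref{thm1}. Concretely, I would redo the single-letterization with $U_i = (M_1, Y_2^{i-1}, Y_{1,i+1}^n)$ and exploit the cognitive-less-noisy condition at the $n$-letter level: the step where one bounds $R_2$ should use $I(U; Y_2) \ge I(U; Y_1)$ to replace a $Y_1$-term by a $Y_2$-term, yielding $nR_2 \le \sum_i I(X_{2,i}; Y_{2,i} | U_i) + n\epsilon_n$ without the residual $I(U;Y_1) - R_1$ slack, because the entropy terms telescope differently once the less-noisy inequality is applied. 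Equivalently, one shows via Fano and the chain rule that $nR_2 = H(M_2) \le I(M_2; Y_2^n | M_1) + n\epsilon_n$ and then, splitting and using Csiszár's sum identity together with \eqref{eq:defn2}, that the cross terms $\sum_i \big(I(U_i; Y_{2,i}) - I(U_i; Y_{1,i})\big) \ge 0$ can be dropped in the favorable direction. Finally I would introduce a time-sharing variable $Q$, absorb it into $U$, and invoke the standard cardinality bound on $|\mathcal U|$ to conclude that the region is exhausted by distributions $p(u, x_2)$, matching the achievable region exactly and completing the proof.
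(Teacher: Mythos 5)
Your achievability argument is the same as the paper's and is correct: set $U=(W,X_1)$ in Theorem~\ref{thm2}, note $I(X_1,X_2;Y_2)=I(U;Y_2)+I(X_2;Y_2|U)$, and use \eqref{eq:defn2} to conclude the sum-rate constraint in \eqref{eq:inner2} is inactive (this is exactly the step the paper leaves as ``simply verified'').

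The converse, however, is where the proposal has a genuine gap. You correctly diagnose that \eqref{eq:O-1} and \eqref{eq:O-3} for a \emph{fixed} $U$ only give $R_2\le I(X_2;Y_2|U)+\bigl(I(U;Y_1)-R_1\bigr)$ and so do not pin down the claimed region pointwise, but your proposed repair is not carried out: redoing the single-letterization with $U_i=(M_1,Y_2^{i-1},Y_{1,i+1}^n)$ and asserting that ``the entropy terms telescope differently'' is precisely the unproven content. Indeed, with that auxiliary (which contains the future $Y_{1,i+1}^n$) the Csisz\'ar-sum manipulation is what produces the sum-rate form \eqref{eq:O-3} in the first place; obtaining a standalone bound $nR_2\le\sum_i I(X_{2i};Y_{2i}|U_i)+n\epsilon_n$ requires different bookkeeping, e.g.\ an auxiliary of the type $U_i=(M_1,Y_2^{i-1})$ (for which the $R_2$ bound follows cleanly from Fano and the memoryless property) together with a lemma, proved from \eqref{eq:defn2} via the Csisz\'ar sum identity as in the less noisy BC converse, that $\sum_i I(Y_1^{i-1};Y_{1i}|M_1)\le\sum_i I(Y_2^{i-1};Y_{1i}|M_1)$ so that the $R_1$ bound can be written with the same $U_i$. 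None of this appears in the proposal; the decisive inequality is simply asserted. The paper takes a different and shorter route that avoids any new multi-letter work: it keeps \eqref{eq:O-1} and \eqref{eq:O-3} and invokes the known ``alternative representation'' from the less noisy/more capable BC literature, namely that after taking the union over all $p(u,x_1,x_2)$ the region $\{R_1\le I(U;Y_1),\ R_1+R_2\le I(U;Y_1)+I(X_2;Y_2|U)\}$ coincides with $\{R_1\le I(U;Y_1),\ R_2\le I(X_2;Y_2|U)\}$ (the fixed-$U$ rate transfer you worried about is absorbed by a different choice of auxiliary, and the less-noisy condition is what makes the two unions agree). So your instinct that something beyond black-boxing Theorem~\ref{thm1} is needed is sound, but to complete the proof you must either establish that single-letter representation equivalence or actually prove the direct multi-letter converse you sketch; as written, the proposal does neither.
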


\begin{proof}
Consider the achievable region in Theorem~\ref{thm2} and define $U=(W,X_1)$.
From \eqref{eq:defn2} we know that, for the cognitive-less-noisy DM-CIC, $I(U;Y_1) \leq I(U;Y_2)$. Then, it can be simply verified that, the third inequality in
Theorem~\ref{thm2} becomes redundant for this channel. Thus, the achievability of the rate region in Theorem~\ref{thm4} immediately follows.
To prove the converse, we consider inequalities \eqref{eq:O-1} and \eqref{eq:O-3} from the outer bound in Theorem~\ref{thm1}, which are
\begin{equation}
\begin{aligned}
  R_1 &\leq I(U;Y_1),    \\
  R_1 + R_2 &\leq I(X_2;Y_2|U) + I(U;Y_1).
   \label{eq:region1}
\end{aligned}
\end{equation}
Clearly, these two inequalities make an outer bound on the capacity of
any DM-CIC for some joint distributions $p(u,x_1,x_2)p(y_1,y_2|x_1,x_2)$.
An alternative representation
of this outer bound is given by \cite{ElGamal}, \cite{ElGamalMoreCapable},
\begin{equation}
\begin{aligned}
  R_1 &\leq I(U;Y_1),    \\
  R_2 &\leq I(X_2;Y_2|U),
    \label{eq:region2}
\end{aligned}
\end{equation}
which is equal to the achievable region given in Theorem~\ref{thm4}. 
Hence, the rate region in Theorem~\ref{thm4}
is the capacity of the cognitive-less-noisy DM-CIC.
Note that the regions characterized by  \eqref{eq:region1} and \eqref{eq:region2} 
are not necessarily equal for fixed $U, X_1$; however, 
their convex hull over all $p(u,x_1)$ becomes the same.

\end{proof}
We further observe that the auxiliary random variable $U$ in the capacity region, can be replaced by $(W, X_1)$,
which results in the following corollary.

\begin{cor}
The capacity region of the cognitive-less-noisy DM-CIC can be expressed as
\begin{equation}
\begin{aligned}
    R_1 &\leq I(W,X_1;Y_1),   \\
R_2 &\leq I(X_2;Y_2|W,X_1),
     \label{eq:cap2}
\end{aligned}
\end{equation}
for some $p(w,x_1,x_2)$.
\label{cor1}
\end{cor}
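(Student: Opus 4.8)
The plan is to show the two regions---the one in Theorem~\ref{thm4} (over $p(u,x_2)$) and the one in \eqref{eq:cap2} (over $p(w,x_1,x_2)$)---describe the same set of rate pairs, so that the corollary follows immediately from Theorem~\ref{thm4}. The argument is a standard auxiliary-variable substitution, carried out in two directions.

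First I would prove the easy inclusion: every rate pair achievable via \eqref{eq:cap2} is in the region of Theorem~\ref{thm4}. Given any $p(w,x_1,x_2)$, set $U=(W,X_1)$. Then $I(U;Y_1)=I(W,X_1;Y_1)$ and, since $Y_2 - (X_1,X_2,W) - $ and $U$ together with $X_2$ determine the relevant conditioning, $I(X_2;Y_2|U)=I(X_2;Y_2|W,X_1)$ by the definition of conditional mutual information. The induced joint law on $(U,X_2)$ is a valid $p(u,x_2)$, so the pair lies in the Theorem~\ref{thm4} region. Second, for the reverse inclusion, take any $p(u,x_2)$ feasible for Theorem~\ref{thm4}; I would simply relabel, taking $W=U$, $X_1$ a constant (or, if one insists on a genuine $X_1$, letting $X_1$ be any deterministic function so that $(W,X_1)$ carries the same information as $U$). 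Then $I(W,X_1;Y_1)=I(U;Y_1)$ and $I(X_2;Y_2|W,X_1)=I(X_2;Y_2|U)$, so the pair is in \eqref{eq:cap2}. Combining the two inclusions gives equality of the regions, hence \eqref{eq:cap2} is the capacity region.

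The main subtlety---and the only place that requires care---is the consistency of the joint distributions: in Theorem~\ref{thm4} the optimization is over $p(u,x_2)$ with no explicit $X_1$, whereas $X_1$ appears in the channel and in Corollary~\ref{cor1}. I would note that because $X_1$ does not enter either expression in \eqref{eq:cap1} except through $U$, and because $X_1$ plays no role in the channel's relevance to receiver~1 beyond what $U$ captures in the bound $I(U;Y_1)$, one may always augment a given $p(u,x_2)$ to a $p(w,x_1,x_2)$ with $(W,X_1)$ informationally equivalent to $U$ without changing either rate bound; conversely the projection $(W,X_1)\mapsto U=(W,X_1)$ is always legitimate. I expect this bookkeeping to be the only obstacle, and it is minor; no new coding or converse argument is needed beyond Theorem~\ref{thm4}.
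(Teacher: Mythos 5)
Your first inclusion (achievability) is sound and matches the paper: taking $U=(W,X_1)$ shows every point of \eqref{eq:cap2} lies in the region of Theorem~\ref{thm4}, or equivalently one drops the sum-rate constraint of Theorem~\ref{thm2} using \eqref{eq:defn2}. The problem is your reverse inclusion, which is where the converse of the corollary actually lives. First, the region of Theorem~\ref{thm4} is not genuinely ``over $p(u,x_2)$ with no $X_1$'': $X_1$ is a physical channel input, the quantities $I(U;Y_1)$ and $I(X_2;Y_2|U)$ are computed under a joint law $p(u,x_1,x_2)p(y_1,y_2|x_1,x_2)$, and $X_1$ affects both outputs directly through the channel. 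So your fix of ``relabel $W=U$ and take $X_1$ constant (or a deterministic function informationally equivalent to $U$)'' is not a relabeling at all: replacing the actual $X_1$ by a constant changes the input distribution and hence every mutual information in sight. If instead you keep the true $X_1$ and set $W=U$, the $R_1$ bound is fine since $I(U,X_1;Y_1)\geq I(U;Y_1)$, but the $R_2$ bound need not survive: $I(X_2;Y_2|U,X_1)=I(X_2;Y_2|U)+I(X_1;Y_2|U,X_2)-I(X_1;Y_2|U)$ has no fixed sign relative to $I(X_2;Y_2|U)$, so the term-by-term substitution does not establish that the Theorem~\ref{thm4} region is contained in \eqref{eq:cap2}. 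Your argument as written therefore does not prove the converse.

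The reason the $(W,X_1)$ form is nonetheless legitimate is structural, not a generic single-letter substitution: in the converse (Theorem~\ref{thm1}, or the outer bound of Wu et al.\ invoked by the paper) the auxiliary is $U_i=(M_1,Y_2^{i-1},Y_{1,i+1}^n)$, which contains $M_1$; since $X_{1i}$ is a deterministic function of $M_1$, the optimizing $U$ can always be written as $(W,X_1)$ without loss of generality. The paper's proof exploits exactly this: it cites the outer bound of \cite{Wu-Vishwanath}, already parameterized by $p(w,x_1,x_2)$ in the form \eqref{eq:cap3} with constraints on $R_1$ and $R_1+R_2$, and then converts that form into the individual-rate form \eqref{eq:cap2} by the same union-over-distributions argument used in the proof of Theorem~\ref{thm4}. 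To repair your proposal you would need either to import such an outer bound or to rerun the converse and observe that $X_{1i}$ is a function of the auxiliary; the bookkeeping you dismissed as minor is in fact the substantive step.
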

\begin{proof}
The achievability of this region is obvious from Theorem~\ref{thm2}
and the condition in \eqref{eq:defn2}. To prove the converse, we use
the last two constraints of the outer bound in \cite[Theorem 3.2]{Wu-Vishwanath},
which are (note the reversal of indices),
\begin{equation}
\begin{aligned}
    R_1 &\leq I(W,X_1;Y_1),   \\
 R_1 + R_2 &\leq I(X_2;Y_2|W,X_1) +I(W,X_1;Y_1),
  \label{eq:cap3}
\end{aligned}
\end{equation}
for some $p(w,x_1,x_2)$.
However, with a similar argument used in the proof of Theorem~\ref{thm4}, the outer bound in \eqref{eq:cap3}
can be alternatively represented as the constraints in \eqref{eq:cap2}.
  \end{proof}
   The capacity-achieving technique in Theorem~\ref{thm4} is the well-known superposition coding,
similar to that in the less noisy BC \cite{ElGamal}. Superposition coding has been proved
to be optimal encoding in several other cases, both for the DM-CIC (see Table~\ref{table1}) and the GCZIC \cite{vaezi2011superposition}.

 \subsection{The Primary-less-noisy DM-CIC}
\label{sec:cap2}
One may expect a similar result for the primary-less-noisy DM-CIC, by applying
the corresponding condition in \eqref{eq:defn1} to the rate region in Theorem~\ref{thm3}.
However, since Theorem~\ref{thm3} holds only for independent $x_1$ and $x_2$,
capacity region cannot be established in general. Instead, we can write
\begin{cor}
The union of all rate pairs $(R_1 , R_2 )$ satisfying
\begin{align}
R_1 &\leq I(X_1;Y_1|V),\\
R_2 &\leq I(V;Y_2),
  \label{eq:cap1}
\end{align}
over all probability distributions  $p(v,x_1,x_2,y_1,y_2)$ that factors as $p(v)p(x_2)p(y_1,y_2|x_1,x_2)$
is achievable for the primary-less-noisy DM-CIC.
\label{cor2}
\end{cor}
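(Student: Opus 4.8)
The plan is to obtain Corollary~\ref{cor2} directly from the achievable region of Theorem~\ref{thm3} by a change of auxiliary variable, and then to use the primary-less-noisy hypothesis \eqref{eq:defn1} to discard the sum-rate constraint. Concretely, I would start from the region \eqref{eq:inner1}, which is achievable for every input law factoring as $p(w,x_2)p(x_1)$, and set $V=(W,X_2)$. Since Theorem~\ref{thm3} already forces $X_1$ to be independent of the pair $(W,X_2)$, this relabeling makes $p(v,x_1)=p(v)p(x_1)$, so the overall law is of the required product type $p(v)\,p(x_1)\,p(y_1,y_2|x_1,x_2)$ (i.e. $X_1$ independent of $V$), and the first two inequalities of \eqref{eq:inner1} read verbatim $R_1\le I(X_1;Y_1|V)$ and $R_2\le I(V;Y_2)$.

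It then remains to show that the third inequality of \eqref{eq:inner1}, $R_1+R_2\le I(X_1,X_2;Y_1)$, is implied by the first two whenever the channel is primary-less-noisy. First I would invoke Definition~\ref{cond1} with the admissible choice $U=V=(W,X_2)$, giving $I(V;Y_1)\ge I(V;Y_2)$ and hence $R_2\le I(V;Y_2)\le I(V;Y_1)$. Adding this to $R_1\le I(X_1;Y_1|V)$ and applying the chain rule,
\begin{equation*}
R_1+R_2 \;\le\; I(X_1;Y_1\mid V) + I(V;Y_1) \;=\; I(X_1,V;Y_1) \;=\; I(W,X_1,X_2;Y_1).
\end{equation*}
Because the transition law $p(y_1,y_2|x_1,x_2)$ does not involve $W$, the Markov chain $W\to(X_1,X_2)\to Y_1$ holds, so $I(W;Y_1\mid X_1,X_2)=0$ and therefore $I(W,X_1,X_2;Y_1)=I(X_1,X_2;Y_1)$. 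Thus every rate pair with $R_1\le I(X_1;Y_1|V)$ and $R_2\le I(V;Y_2)$ automatically satisfies the sum-rate bound of \eqref{eq:inner1}; the sum constraint is therefore redundant for each fixed $p(v)p(x_1)$, and taking the union over all such distributions yields precisely the region claimed in Corollary~\ref{cor2}, establishing its achievability.

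I do not anticipate a genuine obstacle: this is the primary-less-noisy mirror of the argument used to prove Theorem~\ref{thm4}, and it is in fact slightly cleaner, since here the redundancy of the sum-rate constraint holds pointwise in the input distribution rather than only after a convex-hull step. The one point requiring a little care is bookkeeping --- checking that $(W,X_2)$ is a legitimate auxiliary $U$ in Definition~\ref{cond1}, which holds because \eqref{eq:defn1} is assumed for \emph{all} $p(u,x_1,x_2)$, including laws in which $x_2$ is a deterministic component of $u$, and verifying that the relabeling $V=(W,X_2)$ preserves the independence of $X_1$ and $(W,X_2)$ imposed by Theorem~\ref{thm3}, which is immediate.
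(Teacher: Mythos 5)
Your proposal is correct and follows essentially the same route as the paper: set $V=(W,X_2)$ in the region of Theorem~\ref{thm3} and use the primary-less-noisy condition \eqref{eq:defn1} to render the sum-rate constraint $R_1+R_2\le I(X_1,X_2;Y_1)$ redundant. Your write-up merely makes explicit the chain-rule and Markov-chain steps that the paper leaves implicit.
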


\begin{proof}
By symmetry, the proof of this theorem follows the same line of argument as the proof of Theorem~\ref{thm4}. %Thus, we just highlight the differences.
 To prove the achievability, define $V=(W,X_2)$ and apply the condition of
 the primary-less-noisy DM-CIC in \eqref{eq:defn1} to Theorem~\ref{thm3}; this makes the third inequality of
Theorem~\ref{thm3} redundant and completes the proof of the achievability.
\end{proof}
 Note that, from \eqref{eq:O-2} and \eqref{eq:O-4} a outer bound that resembles
 the rate region in Corollary~\ref{cor2} can be built, but this outer bound is over $p(v,x_2)$ which is, in general, larger than
 the inner bound in Corollary~\ref{cor2}. Nevertheless, in the following section
 we discuss that this region can result in capacity region for a particular channel.

\begin{table*}[tb]
\renewcommand{\arraystretch}{1.3}
\caption{Summary of the capacity results for the discrete memoryless cognitive interference channel} \label{table1}
\centering
\scalebox{.99}{
\begin{tabular}{|c|c|c|c|c|}
\hline
\bfseries Label & \bfseries Condition & \bfseries Capacity region & \bfseries Encoding  & \bfseries Reference \\
\hline\hline\

$\mathcal C_1$  &  $I(X_1,X_2;Y_1) \leq I(X_1,X_2;Y_2)$ & $
   R_1 + R_2 \leq I(X_1,X_2;Y_1) $ & superposition coding &  \cite{Maric1} \\
$ $ & $ I(X_2;Y_2|X_1) \leq I(X_2;Y_1|X_1) $ & $ R_2 \leq I(X_2;Y_2|X_1) $ &   &  \\
\hline

$\mathcal C_2$ & $ I(X_1;Y_1) \leq I(X_1;Y_2) $ & $ R_1 \leq I(U,X_1;Y_1) $ & superposition coding  & \cite{Wu-Vishwanath} \\
 $ $ &$ I(U;Y_1|X_1) \leq I(U;Y_2|X_1)  $ & $ R_2 \leq I(X_2;Y_2|U,X_1) $ &  &  \\
\hline

$$  & $ $ & $ R_1 \leq I(U,X_1;Y_1) $ & rate-splitting,\rlap{\textsuperscript{*}}  &  \\
$ \mathcal C_3$ & $  I(U,X_1;Y_1) \leq I(U,X_1;Y_2) $ & $ R_2 \leq I(X_2;Y_2|X_1) $ & binning, and & \cite{Rini2} \\
$ $ & $ $ & $ R_1 + R_2 \leq I(U,X_1;Y_1) + I(X_2;Y_2|U,X_1) $ &  superposition coding &   \\
\hline

$\mathcal C_4$  & $ I(U;Y_1) \leq I(U;Y_2) $ & $ R_1 \leq I(U;Y_1) $ & superposition coding  & Theorem~\ref{thm4} \\
$  $ & $(\mbox {cognitive-less-noisy DM-CIC})$ & $ R_2 \leq I(X_2;Y_2|U) $  &  &  \\
\hline

%$\mathcal C_5$  & $ I(V;Y_1) \geq I(V;Y_2) $ & $ R_2 \leq I(V;Y_2) $ & superposition coding  & Theorem [\ref{thm4}] \\
% $$ &$ (\mbox {primary-less-noisy DM-CIC})  $ & $ R_1 \leq I(X_1;Y_1|V) $ &  &  \\
%\hline

\end{tabular}}
\begin{flushleft}
 \qquad \qquad  \qquad \scriptsize\textsuperscript{*} It should be emphasized that the technique used to achieve $\mathcal C_3$ effectively is superposition coding, although it is
 derived (simplified) from a scheme  \\
 \qquad \qquad  \qquad  \qquad  that uses rate-splitting, binning, and superposition coding. In fact, $\mathcal C_3$ is only a different representation $\mathcal C_2$, as shown in \cite{VaeziMoreCapable}
\end{flushleft}
\end{table*}

\section{Comparison and Discussion}
\label{sec:dis}
In this section we compare the capacity region obtained in
Theorem~\ref{thm4}
with the existing capacity results for the DM-CIC.
Table I summarizes the capacity results for the DM-CIC in the chronological order.
%Although the first two capacity results in Table~\ref{table1} are subsets of the third capacity region \cite{Rini2},
%these are listed for historical reason and for the sake of comparison.

We show that the capacity of the cognitive-less-noisy DM-CIC is
a subset of the capacity region derived in \cite{Wu-Vishwanath}, which is labeled as $\mathcal C_2$ in Table~\ref{table1}.
To this end, we first show that the condition \eqref{eq:defn2} of the cognitive-less-noisy implies both conditions
required for $\mathcal C_2$. First, since $ I(U;Y_1) \leq I(U;Y_2)$ holds for any $p(u,x_1,x_2)$, it will
result in $ I(X_1;Y_1) \leq I(X_1;Y_2)$ for $U=X_1$. The other condition is also achieved by the following lemma.

\begin{lem}
If $ I(U;Y_1) \leq I(U;Y_2)$ holds for all joint distributions $p(u,x_1,x_2)$, then
$ I(U;Y_1|X_1) \leq I(U;Y_2|X_1)$ for all $p(u,x_1,x_2)$.
\label{lem1}
\end{lem}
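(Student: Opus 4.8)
The plan is to deduce the conditional statement from the unconditional hypothesis by \emph{freezing} the input $X_1$ to a constant. The key observation is that the hypothesis is assumed for \emph{every} joint law $p(u,x_1,x_2)$, including the degenerate ones that put all their mass on a single value of $X_1$; for such a law the unconditional inequality $I(U;Y_1)\le I(U;Y_2)$ is exactly the conditional inequality evaluated at that fixed value of $X_1$, since conditioning on a constant is vacuous.

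Concretely, the first step is: fix an arbitrary $p(u,x_1,x_2)$ and pick any symbol $a$ in the support of $X_1$. Apply the hypothesis to the law $q(u,x_1,x_2)=p(u,x_2\,|\,X_1{=}a)\,\mathbf{1}\{x_1=a\}$, a legitimate choice of input distribution. In every one of these single-letter expressions the Markov chain $U\to(X_1,X_2)\to(Y_1,Y_2)$ holds, so freezing $X_1\equiv a$ makes the effective channel $p(y_1,y_2\,|\,a,x_2)$, and a one-line computation shows $I_q(U;Y_j)=I(U;Y_j|X_1{=}a)$ for $j=1,2$, the right-hand sides being evaluated under the original $p$. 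The hypothesis then reads $I(U;Y_1|X_1{=}a)\le I(U;Y_2|X_1{=}a)$, and this holds for every $a$.

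The last step is to average over $a$: multiplying $I(U;Y_1|X_1{=}a)\le I(U;Y_2|X_1{=}a)$ by $\Pr(X_1=a)$ and summing over the support of $X_1$ gives $I(U;Y_1|X_1)\le I(U;Y_2|X_1)$, the claim. The argument is short; the step I would be most careful about is to notice that one cannot instead simply enlarge the auxiliary to $(U,X_1)$ --- that only yields $I(X_1;Y_1)+I(U;Y_1|X_1)\le I(X_1;Y_2)+I(U;Y_2|X_1)$, and since the hypothesis with auxiliary $X_1$ supplies merely $I(X_1;Y_1)\le I(X_1;Y_2)$, the single-letter terms cannot be cancelled in the right direction. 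The freezing-and-averaging route above sidesteps this difficulty, and by symmetry (swapping $Y_1\leftrightarrow Y_2$) it also yields the analogous implication used for the primary-less-noisy channel.
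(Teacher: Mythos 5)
Your proof is correct and is essentially the paper's own argument: the paper decomposes $I(U;Y_i|X_1)=\sum_{x_1}p(x_1)I(U;Y_i|X_1=x_1)$ and bounds it term by term, invoking the hypothesis for the (implicitly degenerate) distribution with $X_1$ fixed, exactly the freezing-and-averaging step you spell out explicitly. Your version merely makes precise which joint law $p(u,x_1,x_2)$ the hypothesis is applied to, which is a welcome clarification but not a different route.
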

\begin{proof}
See Appendix~\ref{anx3}.
\end{proof}

Thus, the condition required for $\mathcal C_4$ is more demanding than that of $\mathcal C_2$. In other words,
if the cognitive receiver, in a DM-CIC, is less noisy than the primary one, the DM-CIC will satisfy the
 ``weak interference'' conditions. Further, we observe that, for $U=(U,X_1)$ the capacity regions
$\mathcal C_4$ becomes the same as $\mathcal C_2$. This is also evident from Corollary~\ref{cor1}.

It is also worth mentioning that, for $U=X_1$, with further
assumption that $ I(X_2;Y_2|X_1) \leq I(X_2;Y_1|X_1) $, $\mathcal C_4$ becomes equivalent to $\mathcal C_1$. This indicates that
we can use superposition coding to achieve the capacity of the DM-CIC in the ``strong interference'' regime.
Note that, the capacity region in the ``strong interference'' ($\mathcal C_1$ in Table~\ref{table1}),
can be reexpressed as
\begin{align}
R_1 &\leq I(X_1;Y_1),\label{cap2:second}\\
R_2 &\leq I(X_2;Y_2|X_1). \label{cap2:first}
\end{align}
In this setting, $X_1$ and $X_2$, respectively, can be viewed as cloud centers and satellites of superposition coding.
Originally, the achievability of $\mathcal C_1$ is proved by using the capacity of compound multiple accesses channels \cite{Maric2}
which is based on transmitting private and common messages.

It should be highlighted that, the technique used to achieve $\mathcal C_3$ is also effectively superposition coding although it is
 derived (simplified) from a scheme that uses rate-splitting, binning, and superposition coding collectively. This can be verified by
 looking at the simplified encoding in the proof of the achievability in \cite{Rini2}. Therefore, we can see that
 all capacity results in Table~\ref{table1} ($\mathcal C_1 - \mathcal C_4$) can be achieved using superposition
 coding.\footnote{We should emphasis that $\mathcal C_3$ is just a different representation of $\mathcal C_2$;
 this is because the conditions required for these two capacity regions are equal. This is proved in \cite{VaeziMoreCapable}.}

Finally, consider the primary-less-noisy DM-CIC. The condition required for
this channel is rather different from that in all other cases that we know the capacity region, and listed in Table~\ref{table1}. To appreciate this,
from Table~\ref{table1}, one can see that in all those cases ($\mathcal C_1 - \mathcal C_4$)
the cognitive receiver has, in some sense, more information than the primary one.
%, which makes a lot of sense as the cognitive transmitter partially transmits the $M_1$ which helps boost the rate   .
Nevertheless, in a primary-less-noisy DM-CIC,
the primary receiver is assumed to have more information than the cognitive receiver,
as \eqref{eq:defn1} implies. This condition could particularly arise in the cognitive Z-interference channel
in which the link from the primary user to the cognitive receiver is absent.
For example, one can verify that the capacity result for the GCZIC at very
strong interference \cite[Corollary 4]{vaezi2011superposition}
is the counterpart of Corollary~\ref{cor2}, for Gaussian inputs.
This is also shown independently in \cite[Theorem V.2]{Rini3}.

%
%
%\section{Summary}
%\label{sec:sum}
%
%Analysis of the capacity results of the GCZIC shows that
%superposition coding appears to be an indispensable tool
%in any capacity-achieving techniques for this channel.
%At very strong interference, superposition coding single-handedly
%achieves its capacity. However, both DPC and superposition coding
%are needed to establish the capacity when interference is
%weak or intermediate.

\section{Appendix}
\label{anx}

\subsection{Proof of Theorem \ref{thm2}}
\label{anx1}

\begin{proof}
We prove this theorem by showing the code construction, encoding, decoding,
and error analysis.

\subsubsection{Code construction}
Fix $p(w,x_1)$ and $p(x_2|w,x_1)$. Randomly and
independently generate $2^{nR_{1}}$ sequences $(w^n(m_1),x_1^n(m_1))$, $m_1
\in [1: 2^{nR_{1}}]$ {\textit{i.i.d.}} according to $\prod_{i=1}^np_{WX_1}(w_{i},x_{1i})$.
 Next, for each sequence $(w^n(m_1), x_1^n(m_1))$,
randomly and conditionally independently generate $2^{nR_{2}}$ sequences
$x_2^n(m_1, m_2)$, $m_2 \in [1: 2^{nR_{2}}]$, with {\textit{i.i.d.}} elements according to
$\prod_{i=1}^np_{X_2|WX_1}(x_{2i}|w_{i}(m_1)x_{1i}(m_1))$.

\subsubsection{Encoding}
To send messages $(m_1,m_2)$, the primary transmitter
sends the codeword $x_1^n(m_1)$ whereas the secondary transmitter
sends the codeword $x_2^n(m_1,m_2)$.

\subsubsection{Decoding}
We use {\it joint typicality} for decoding. The cognitive
receiver ($Y_2$) can decode both messages whereas the other receiver can only decode one of them, namely $m_1$.
Decoder 1 declares that message $\hat{m}_{1}$ is sent if it is
the unique message such that $(w^n(\hat{m}_1),x_{1}^n(\hat{m}_1), y^n_1) \in {\cal
T}^{(n)}_\epsilon$. Likewise, decoder 2
declares that message $\hat{\hat{m}}_{2}$ is sent if it is the unique message
such that $(w^n({m}_1), x_1^n({m}_1),
x_2^n({m}_1, \hat{\hat{m}}_2), y^n_2) \in {\cal T}^{(n)}_\epsilon$, for some ${m}_1$. In other cases, as analyzed below, the decoders declare error.

\subsubsection{Error Analysis}
Without loss of generality, we assume that
$(M_1,M_2)=(1,1)$ is sent in order to analyze the probability of error. To evaluate the average probability of
error for decoder 1, we define the following error events
\begin{align*}
 E_{11} &= (W^n(1),X_1^n(1), Y^n_1) \notin {\cal T}^{(n)}_\epsilon, \nonumber \\
 E_{12} &= (W^n(m_1),X_1^n(m_1), Y^n_1) \in {\cal T}^{(n)}_\epsilon  \text { for } m_1 \neq 1.
  \label{eq:E1}
\end{align*}
Then, by using union bound, the probability of error for decoder 1 is upper bounded by
\begin{align}
P(E_{1}) = P(E_{11} \cup E_{12} ) \leq P(E_{11}) + P(E_{12} ).
\end{align}
But, $ P(E_{11}) \rightarrow
0 \text { as }  n \rightarrow \infty$, by the law of large numbers (LLN).  Moreover, since for $ m_1 \neq 1 $, $(W^n(m_1),X_1^n(m_1))$
is independent of $(W^n(1),X_1^n(1), Y^n_1)$, by the {\it packing
lemma} \cite{ElGamal},
 $ P(E_{12}) \rightarrow 0 \text { as }  n \rightarrow \infty \text { if }  R_1 \leq I(W,X_1;Y_1)-\delta(\epsilon). $

To evaluate the average probability of error for decoder 2, we define the following error events
\begin{align*}
 E_{21} =& (W^n(1), X_1^n(1), X_2^n(1,1), Y^n_2) \notin {\cal T}^{(n)}_\epsilon, \nonumber \\
 E_{22} =& (W^n(1), X_1^n(1), X_2^n(1,m_2), Y^n_2) \in {\cal
 T}^{(n)}_\epsilon \nonumber \\
 &\text { for some }  m_2 \neq 1,  \nonumber \\
 E_{23} =& (W^n( m_1), X_1^n( m_1),X_2^n(m_1,m_2), Y^n_2) \in {\cal T}^{(n)}_\epsilon  \nonumber \\
 &\text { for some } m_1 \neq 1, m_2 \neq 1.
  \label{eq:E2}
\end{align*}
Using union bound, the probability of error for decoder 1 is bounded as
\begin{align}
 P(E_{2}) &= P(E_{21} \cup E_{22} \cup E_{23} ) \nonumber \\
 &\leq P(E_{21}) + P(E_{22}) +P(E_{23} ).
\end{align}
Now, we evaluate the terms in the right-hand side (RHS) of this inequality when $ n \rightarrow
\infty$. First, by the LLN
$ P(E_{21}) \rightarrow 0 \text { as }  n \rightarrow
\infty. $ Then, for $ m_2 \neq 1 $,
$X_2^n(1,m_2)$ is conditionally independent of $Y^n_2$
given $(W^n(1),X^n_1(1))$. Thus, by the packing lemma
 $ P(E_{22}) \rightarrow 0 \text { as }  n \rightarrow
 \infty \text { given }  R_2 \leq I( X_2;Y_2|W,X_1)-\delta(\epsilon)$.
Finally consider $E_{23}$; for $ m_1 \neq 1 $ and $ m_2 \neq 1 $,
 $(W^n(m_1), X^n_1(m_1), X^n_2(m_1,m_2))$ is independent of
$Y^n_2$. Again, by the packing lemma
 $ P(E_{23}) \rightarrow 0 \text { as }
 n \rightarrow \infty \text { if  }  R_1 + R_2 \leq
 I(W,X_1,X_2;Y_2)-\delta(\epsilon) = I(X_1,X_2;Y_2)
  -\delta(\epsilon)$; the equality follows since $W \rightarrow X_1,X_2 \rightarrow Y_2$
  forms a Markov chain.
The proof of achievability is completed by the above analysis. That is, if \eqref{eq:inner2} is satisfied, both receivers
can decode corresponding messages with the total probability of error tending to
zero. Therefore,
there exists a sequence of good codes for which error probability goes
to 0.
\end{proof}

\subsection{Proof of Theorem \ref{thm3}} %for more capable
\label{anx2}

\begin{proof}
We prove this theorem by showing the code construction, encoding, decoding,
and error analysis.

\subsubsection{Code construction}
Fix $p(x_1)$ and $p(w,x_2)$. Randomly and
independently generate $2^{nR_{1}}$ sequences $x_1^n(m_1)$, $m_1
\in [1: 2^{nR_{1}}]$ {\textit{i.i.d.}} according to $\prod_{i=1}^np_{X_1}x_{1i}$.
 Also, for each $x_1$,
randomly and independently generate $2^{nR_{2}}$ sequences
$w^n(m_1,m_2)x_2^n(m_1,m_2)$, $m_2 \in [1: 2^{nR_{2}}]$, with {\textit{i.i.d.}} elements according to
$\prod_{i=1}^np_{WX_2}w_{i}(m_1,m_2)x_{2i}(m_1,m_2)$.

\subsubsection{Encoding}
To send messages $(m_1,m_2)$, the primary and cognitive transmitters, respectively,
send the codewords $x_1^n(m_1)$ and  $x_2^n(m_1,m_2)$.

\subsubsection{Decoding}
We use {\it joint typicality} for decoding, where the primary
receiver can decode both messages whereas the cognitive receiver can only decode $m_2$.
Decoder 2 declares that message $\hat{m}_{2}$ is sent if it is
the unique message such that $(w^n({m}_1,\hat{m}_2),x_{2}^n({m}_1,\hat{m}_2), y^n_2) \in {\cal
T}^{(n)}_\epsilon$, for some ${m}_1$. Similarly, decoder 1
declares that message $\hat{\hat{m}}_{1}$ is sent if it is the unique message
such that $(w^n(\hat{\hat{m}}_1, {m}_2), x_2^n(\hat{\hat{m}}_1, {m}_2),
x_1^n(\hat{\hat{m}}_1), y^n_2) \in {\cal T}^{(n)}_\epsilon$. In other cases, the decoders declare error.

\subsubsection{Error Analysis}
Error analysis is very similar to that of Theorem~\ref{thm2} and is omitted here.
\end{proof}

\subsection{Proof of Lemma \ref{lem1}}
\label{anx3}

\begin{proof}
The Lemma is similar to \cite[Lemma 5]{Maric2}. We can write
\begin{align}
I(U;Y_1|X_1) &= \sum_{x_1}p(x_1)I(U;Y_1|X_1=x_1) \nonumber \\
& \leq \sum_{x_1}p(x_1)I(U;Y_2|X_1=x_1) \nonumber \\
 &= I(U;Y_2|X_1)
\end{align}
the inequality follows because $ I(U;Y_1) \leq I(U;Y_2)$ holds for all joint distributions $p(u,x_1,x_2)$.
\end{proof}

%\cite{Cover,Costa,vaezi2011superposition}

%%\typeout{}
%%\bibliography{Cognitive}
%%\bibliographystyle{ieeetr}

\section*{Acknowledgement}
The author would like to thank Fabrice Labeau for his invaluable support and comments.

% that's all folks
\end{document}